\title{Greedy Monochromatic Island Partitions}
\author{Steven {van den Broek}}{TU Eindhoven, the Netherlands}{s.w.v.d.broek@tue.nl}{https://orcid.org/0009-0005-6677-3916}{}
\author{Wouter Meulemans}{TU Eindhoven, the Netherlands}{w.meulemans@tue.nl}{https://orcid.org/0000-0002-4978-3400}{W. Meulemans is partially supported by the Dutch Research Council (NWO) under project number VI.Vidi.223.137.}
\author{Bettina Speckmann}{TU Eindhoven, the Netherlands}{b.speckmann@tue.nl}{https://orcid.org/0000-0002-8514-7858}{}
\authorrunning{S. van den Broek, W. Meulemans and B. Speckmann}
\keywords{Approximation algorithms, colored points, partitions}
\newcommand{\OptL}{\text{Opt}_{\text{L}}}
\newcommand{\OptDJ}{\text{Opt}_{\text{P}}}
\newcommand{\OptO}{\text{Opt}_{\text{C}}}
\newcommand{\CH}[1]{\mathcal{CH}(#1)}
\newcommand{\N}{\mathbb{N}}
\newcommand{\R}{\mathbb{R}}
\newcommand{\set}[1]{\{#1\}}
\newcommand{\FlatTree}[1]{%
    \ifthenelse{ \equal{#1}{} }
        {\textsf{\textsc{\upshape FlatTree}}}
        {\textsf{\textsc{\upshape FlatTree}}(#1)}
}
\begin{document}

\maketitle
\begin{abstract}
Constructing partitions of colored points is a well-studied problem in discrete and computational geometry. 
We study the problem of creating a minimum-cardinality partition into monochromatic islands. 
Our input is a set $S$ of $n$ points in the plane where each point has one of $k \geq 2$ colors.
A set of points is monochromatic if it contains points of only one color. 
An island $I$ is a subset of $S$ such that $\CH{I} \cap S = I$, where $\CH{I}$ denotes the convex hull of $I$. 
We identify an island with its convex hull; therefore, a partition into islands has the additional requirement that the convex hulls of the islands are pairwise-disjoint. 
We present three greedy algorithms for constructing island partitions and analyze their approximation ratios.
\end{abstract}

\section{Introduction}
Constructing partitions of colored points is a well-studied problem in discrete~\cite{monochromatic-parts, survey-discrete-red-blue-new} and computational geometry~\cite{geometric-partitions, separate-by-lines-FPT, two-disjoint-disks, diverse-partition}.
The colors of the points can be present in the constraints and the optimization criterion in different ways.
For example, one may require the partition to be \emph{balanced}---see the survey by Kano and Urrutia~\cite{survey-discrete-red-blue-new} for many such instances---or \emph{monochromatic}~\cite{geometric-partitions, separate-by-lines-FPT, two-disjoint-disks, monochromatic-parts}.
Alternatively, one may want to minimize or maximize the \emph{diversity}~\cite{diverse-partition} or \emph{discrepancy}~\cite{coarseness, coarseness-new} of the partition.
Furthermore, one can use different geometries to partition the points, such as triangles~\cite{geometric-partitions}, disks~\cite{two-disjoint-disks}, or lines~\cite{separate-by-lines-FPT}.

We study the problem of creating a minimum-cardinality partition into \emph{monochromatic islands}~\cite{optimal-islands}.
Our input is a set $S$ of $n$ points in the plane where each point has one of $k \geq 2$ colors.
A set of points is \emph{monochromatic} if it contains points of only one color. 
An \emph{island}~$I$ is a subset of $S$ such that $\CH{I} \cap S = I$, where $\CH{I}$ denotes the convex hull of $I$.
We identify an island with its convex hull; therefore, a partition into islands has the additional requirement that the convex hulls of the islands are pairwise-disjoint.

\subparagraph{Related work.}
Bautista-Santiago et al.~\cite{optimal-islands} study islands and describe an algorithm that can find a monochromatic island of maximum cardinality in $O(n^3)$ time, improving upon an earlier $O(n^3 \log n)$ algorithm~\cite{DBLP:journals/comgeo/Fischer97}.
Dumitrescu and Pach~\cite{monochromatic-parts} consider monochromatic island partitions and prove how many islands are sufficient and sometimes necessary for different types of input.
Bereg et al.~\cite{coarseness} use island partitions to define a notion of \emph{coarseness} that captures how blended a set of red and blue points are. 
Agarwal and Suri~\cite{geometric-partitions} study the following problem: given red and blue points, cover the blue points with the minimum number of pairwise-disjoint monochromatic triangles.
They prove that this problem is NP-hard and describe approximation algorithms.
Their NP-hardness reduction can be used to prove that covering and partitioning points of only one color into the minimum number of monochromatic islands---the points of the other colors serving only as obstacles---is NP-hard, as observed by Bautista-Santiago et al.~\cite{optimal-islands}.
We suspect that the problem we study is NP-hard as well, which motivates us to focus on approximation algorithms.

\subparagraph{Overview.}
In the remainder, we consider only monochromatic islands. We denote by $\OptDJ$ the minimum cardinality of an island partition of $S$.
In the following sections, we use three greedy algorithms---\emph{disjoint-greedy}, \emph{overlap-greedy}, and \emph{line-greedy}---to construct island partitions.
Disjoint-greedy creates an island partition by iteratively picking the island that covers most uncovered points and does not intersect any island chosen before.
We prove that disjoint-greedy has an approximation ratio of $\Omega(n/\log^2{n})$.
The overlap-greedy algorithm greedily constructs an $O(\log{n})$-approximation of the minimum-cardinality island cover.
We prove that any algorithm that transforms an island cover returned by overlap-greedy into an island partition has approximation ratio $\Omega(\sqrt{n})$, and describe one such algorithm that has approximation ratio $O(\OptDJ^2 \log^2{n})$.
Lastly, we investigate the relation between constructing a minimum-cardinality island partition and finding the minimum number of lines that separate the points into monochromatic regions.
In particular, we show that greedily choosing the line that separates most pairs of points of different color induces an $O(\OptDJ \log^2{n})$-approximation to the minimum-cardinality island partition.
Figure~\ref{fig:algorithms} illustrates the greedy algorithms.

\begin{figure}[tb]
    \centering
    \includegraphics[page=2]{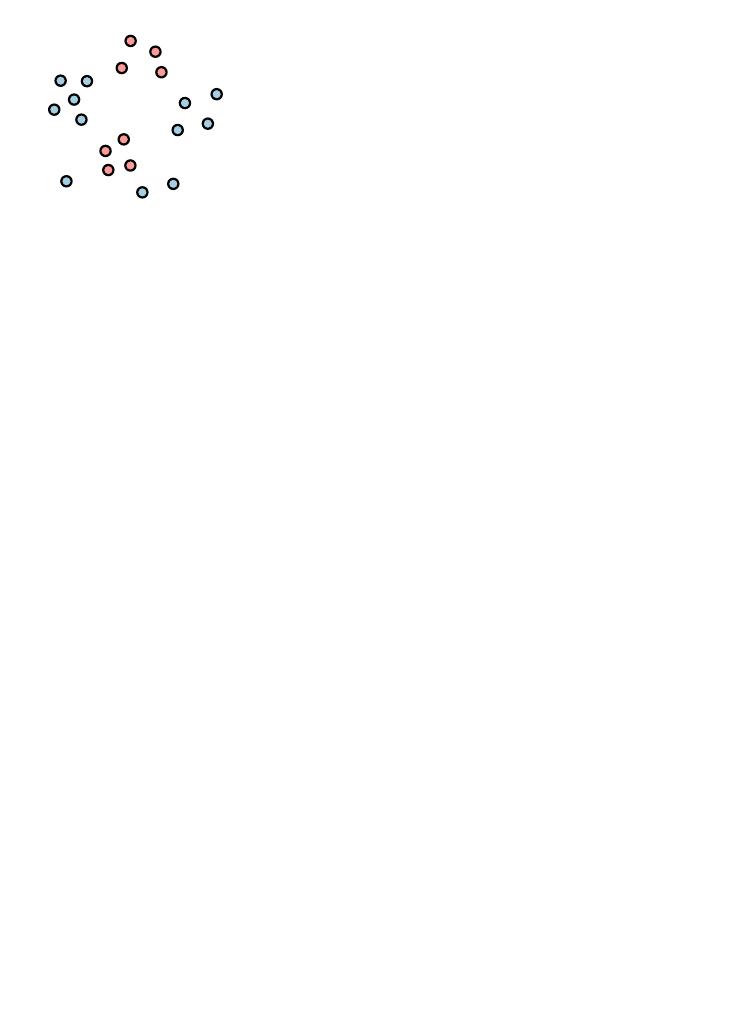}
    \includegraphics[page=3]{greedy-examples.pdf}
    \includegraphics[page=4]{greedy-examples.pdf}
    \includegraphics[page=6]{greedy-examples.pdf}
    \caption{Left: optimal island partition; middle-left: disjoint-greedy island partition; middle-right: overlap-greedy island cover; right: line-greedy separating lines.}
    \label{fig:algorithms}
\end{figure}

\section{Disjoint-Greedy}\label{sec:disjoint-greedy}
We first sketch our lower bound construction that shows that disjoint-greedy has an approximation ratio of $\Omega(n/\log^2{n})$.
Consider a family of problem instances that have the form of two opposing complete binary trees of height~$\ell$ (Figure~\ref{fig:disjoint-greedy:instance}). 
Sets of points are placed close together at the nodes of these trees.
The idea is that by placing sufficiently many points at the nodes, and by placing obstacle points appropriately, the disjoint-greedy algorithm iteratively picks points of two opposing nodes such that the problem instance is split into two symmetric nearly independent parts that have nearly the same structure as the original instance.
This results in disjoint-greedy returning a partition into $\Omega(2^\ell)$ islands (Figure~\ref{fig:disjoint-greedy:lower:greedy}).
However, there exists a partition such that each layer in the tree consists of a constant number of islands, resulting in $O(\ell)$ islands in total (Figure~\ref{fig:disjoint-greedy:lower:optimal}).
In our construction, the number of red points at a node at height $i \in \set{0, \dots, \ell - 1}$ is $2^{i+6}$ and the number of blue points at a node is constant.
Therefore, the number of points in a layer is $\Theta(2^\ell)$.
As there are $\Theta(\ell)$ layers, there are $\Theta(\ell \cdot 2^\ell)$ points in total and the approximation ratio of disjoint-greedy is $\Omega(2^\ell / \ell) = \Omega(n / \log^2{n})$. 

\begin{figure}[tb]
    \centering
    \includegraphics[page=1]{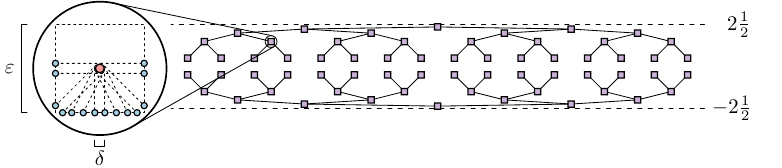}
    \caption{The problem instance for $\ell = 5$. The lines in the figure are not part of the problem instance, but illustrate its structure. The purple squares represent red and blue points lying close together inside a square. The red disk inside the square represents many red points placed together inside a disk. The centers of the purple squares lie within the strip bounded by the two dashed lines.}
    \label{fig:disjoint-greedy:instance}
\end{figure}
\begin{figure}[tb]
    \centering
    \includegraphics[page=2]{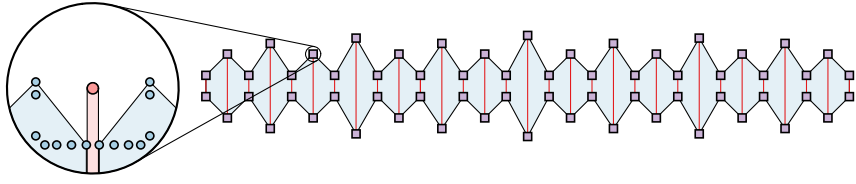}
    \caption{The solution returned by disjoint-greedy.}
    \label{fig:disjoint-greedy:lower:greedy}
\end{figure}
\begin{figure}[t!]
    \centering
    \includegraphics[page=3]{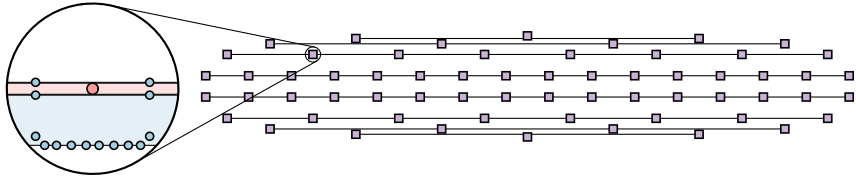}
    \caption{An alternative solution, serving as an upper bound for the optimal solution.}
    \label{fig:disjoint-greedy:lower:optimal}
\end{figure}

We provide a formal definition of the family of problem instances we have described.
There are three parameters: $\ell \in \N_{\geq 1}$, $\varepsilon \in \R_{> 0}$ and $0 < \delta < \varepsilon$. 
We denote a particular problem instance with $\FlatTree{\ell, \varepsilon, \delta}$, which contains the points we describe next.
Define
\[
    P^+_{i, j} =
    (2^i + j \cdot 2^{i+1}, 2\frac{1}{2} - 2^{1-i}) 
    ,
    \quad \text{for } 
    i \in \set{0, \dots, \ell - 1}, j \in \set{ 0, \dots, 2^{\ell - 1 - i} - 1}.
\]
Referring to Figure~\ref{fig:disjoint-greedy:instance}, $P_{i, j}^+$ is the position of the $j$'th purple square at height~$i$ in the top tree.
Define $r_i = 2^{i+6}$, which will determine the number of red points in a purple square at height~$i$. 
Let $B_r(x)$ denote the open ball centered at $x$ with radius $r$.
The red points corresponding to $P_{i, j}^+$ are
\begin{align*}
    R^+_{i, j} &= 
    \ 
    \text{a set of } r_j \text{ distinct points in } B_{\delta/2}(P_{i, j}^+)
    .
\end{align*}
Next, we describe the blue points $B_{i, j}^+$ corresponding to $P_{i, j}^+$.
Let $\overline{Q}_w(x)$ denote the closed square centered at $x$ of width $w$.
The outer common tangents of $B_{\delta / 2}(P_{i, j}^+)$ and balls belonging to specific other purple squares are intersected with the boundary of $\overline{Q}_{\varepsilon}(P_{i, j}^+)$ to give rise to the blue points corresponding to $P_{i, j}^+$ (Figure~\ref{fig:disjoint-greedy:instance}).
Let a set with a $-$ superscript denote taking the corresponding set with a $+$ superscript but mirroring its points over the $y$-axis.
Then, outer common tangents are created with balls of the following squares, if they exist:
\[
P_{i, j-1}^+,\, P_{i, j+1}^+,\, P_{i, j}^-,\, P_{i-1, 2j}^\times,\, P_{i-1, 2j+1}^\times,
\]
where $\times \in \set{-, +}$.
$\FlatTree{\ell, \varepsilon, \delta}$ consists of red points $R_{i, j}^+$ and $R_{i, j}^-$ and blue points $B_{i, j}^+$ and $B_{i, j}^-$ for all $i, j$.

Intuitively, we want to prove that there exists a $\FlatTree{}$ instance for which disjoint-greedy returns a partition that contains at least the vertical red islands illustrated in Figure~\ref{fig:disjoint-greedy:lower:greedy}. 
We define these vertical red islands formally as follows.
\begin{align*}
    V_{i, j} &= R^-_{i, j} \cup R^+_{i, j}\\
    V_i &= \set{V_{i, j} \mid j \in \set{0,\, \dots,\, 2^{\ell - 1 - i} - 1}}\\
    V &= \set{V_{i, j} \mid i \in \set{0, \dots, \ell - 1}, j \in \set{0,\, \dots,\, 2^{\ell - 1 - i} - 1}}
\end{align*}
We can now state the lemma.
\begin{lemma}\label{lemma:disjoint-greedy:greedy-sol-lower}
For any $\ell \in \N_{\geq 1}$, there exist an $\varepsilon > 0$ and a $0 < \delta < \varepsilon$ such that for input $\FlatTree{\ell, \varepsilon, \delta}$
disjoint-greedy returns a partition $\mathcal{P}$ that is a superset of $\,V$. 
\end{lemma}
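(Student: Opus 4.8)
The plan is to reduce the statement to a single \emph{structural claim} about the shape of monochromatic islands in $\FlatTree{\ell,\varepsilon,\delta}$, and then to run a short induction on the greedy process. Throughout I fix $\ell$ and commit to the values of $\varepsilon$ and $\delta$ only at the end, choosing them small enough that all the geometric inequalities below hold simultaneously: $\varepsilon$ smaller than a fixed fraction of the minimum distance between two purple squares, and $\delta$ much smaller still, so that each red ball $B_{\delta/2}(P^{\pm}_{i,j})$ is tiny compared to its surrounding square $\overline{Q}_{\varepsilon}(P^{\pm}_{i,j})$ and compared to the distance to every other node. I take the natural reading $r_i = 2^{i+6}$, so that $|V_{i,j}| = 2r_i = 2^{i+7}$.

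The heart of the argument is the following structural claim: for the chosen $\varepsilon,\delta$, every monochromatic \emph{red} island $I$ satisfies $I \subseteq V_{i,j}$ for a single pair $(i,j)$. In words, a red island can gather the red points of one node or of one mirror pair of nodes, but never red points of two nodes that do not form such a pair. I would prove this in two parts. First, suppose $I$ contains red points from two distinct nodes that are not a mirror pair. By the binary-tree layout, either the segment joining the two node balls passes over a third node, so that $\CH{I}$ contains a red or blue point of that intermediate node; or the two nodes are adjacent (same-height siblings $P_{i,j\pm 1}$, or a parent with one of its children $P_{i-1,2j}, P_{i-1,2j+1}$), in which case $\CH{I}$ engulfs one of the blue points placed on the outer common tangents of the two corresponding balls at the boundary of $\overline{Q}_{\varepsilon}$. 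Either way $I$ is not a monochromatic island. Second, for a genuine mirror pair $V_{i,j} = R^+_{i,j}\cup R^-_{i,j}$, the blue points arising from the tangents with $P^{\mp}_{i,j}$ lie just outside the thin, essentially vertical capsule $\CH{V_{i,j}}$, so they do not obstruct $V_{i,j}$ itself, yet they do obstruct every attempt to add a further red point sideways: any such extension bulges the hull left or right and swallows one of these blue points. Verifying these incidences quantitatively—that the tangent blue points fall inside all extended hulls but outside every $\CH{V_{i,j}}$, simultaneously for all nodes—is the part I expect to be the main obstacle, and it is exactly what pins down the admissible range of $\varepsilon$ and $\delta$.

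Granting the structural claim, the remaining facts are routine for small $\varepsilon,\delta$: each $V_{i,j}$ is itself a valid monochromatic island, since its capsule hull avoids all blue points and all other nodes; distinct pairs $V_{i,j}$ have pairwise-disjoint hulls, because the capsules sit over pairwise-distinct positions and are separated once $\delta$ is small; and $|V_{i,j}| = 2^{i+7}$, so two pairs have equal size exactly when they share a height, while higher nodes give strictly larger pairs. Moreover every blue island has only $O(1)$ points—blue points lie on square boundaries, and any hull spanning several nodes engulfs a red point—so every blue island is smaller than the smallest pair $V_{0,j}$, which has $128$ points.

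Finally I would run disjoint-greedy by phases on the height $i = \ell-1, \ell-2, \dots, 0$, maintaining the invariant that after processing the heights above $i$ the chosen islands are exactly $\bigcup_{i'>i} V_{i'}$. By the structural claim every red island is a subset of some $V_{a,b}$; subsets of the already-chosen higher pairs intersect a chosen island and are therefore inadmissible, while any island strictly inside an unchosen pair, or inside a single node, has fewer than $2^{i+7}$ points. Hence at the start of phase $i$ the unique maximum-cardinality admissible islands are precisely the unchosen pairs $V_{i,0},\dots,V_{i,2^{\ell-1-i}-1}$, all of size $2^{i+7}$, pairwise disjoint, and disjoint from the chosen islands. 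Whatever tie-breaking the algorithm uses, covering points can only shrink the set of admissible islands, so greedy selects all of these pairs before the maximum size drops to $2^{i+6}$; and the $O(1)$-size blue islands never compete. Since every red point lies in some $V_{i,j}$, once all heights are processed the returned partition $\mathcal{P}$ contains all of $V$ (greedy afterwards covers the leftover blue points with small blue islands, which does not affect the conclusion). Choosing $\varepsilon$ and $\delta$ as above makes every step valid, proving the lemma.
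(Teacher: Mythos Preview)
Your phased-induction framework matches the paper's loop invariant, but the central \emph{structural claim}---that every red monochromatic island is contained in some single $V_{i,j}$---is false, and this is the gap. The blue points are placed on the \emph{outer common tangents} of the two $\delta/2$-balls, and those tangent lines are exactly the supporting lines of the convex hull of the two closed balls. A blue point on such a tangent therefore lies on the \emph{boundary} of the capsule $\CH{B_{\delta/2}(P_1)\cup B_{\delta/2}(P_2)}$, and since the red points sit strictly inside the open balls, the convex hull of the red points is strictly contained in that capsule. Hence the tangent blue points are \emph{outside} $\CH{R^{\times}_{i,j}\cup R^{\times}_{i',j'}}$, not inside it. Concretely, sibling unions $R^{+}_{i,j}\cup R^{+}_{i,j+1}$ and parent--child unions $R^{+}_{i,j}\cup R^{+}_{i-1,2j}$ are genuine red islands for small $\varepsilon,\delta$, and neither is a subset of any $V_{a,b}$. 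The first of these already has $2r_i$ points, exactly tying $|V_{i,j}|$, so your size argument in the phase for height $i$ would break down if this island were admissible.

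What actually rules these competitors out is not the blue points but the \emph{regions} carved by the previously selected vertical islands $V_{\ell-1},\dots,V_{\ell'+1}$: any horizontal sibling union at height $\ell'$ crosses one of those capsules and is therefore not a candidate. The paper's proof exploits this region structure explicitly. Within a region it then uses three weaker observations---essentially that certain \emph{triples} of node-balls (two same-height plus one other; a mirror pair plus one other; a height-$\ell'$ node, one of its children, and one other) cannot all appear in an island---to bound every competing red island strictly below $2r_{\ell'}$. Your claim that blue islands have $O(1)$ points is also too strong; the paper only bounds blue intra-region islands by the total number of blue points in a region, which is $\Theta(2^{\ell'})$ (not constant) but still below $2r_{\ell'}=2^{\ell'+7}$. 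So keep your invariant and phase structure, but replace the global structural claim by the region-confinement plus the three-triple observations.
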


\begin{proof} Let $\varepsilon$ and $\delta$ be such that the constraints stated in the proof are satisfied.
We prove that the following loop invariant holds during the execution of the disjoint-greedy algorithm.

\begin{description}
    
\item[Invariant.] For $k \in \set { 0, \dots, \ell }$ we define invariant $\mathcal{I}(k)$ as: before the start of iteration $2^{k}$, the current island partition $\mathcal{P}$ satisfies:
\[
\mathcal{P} = \bigcup_{j = \ell - k}^{\ell - 1} \ V_j.
\]

\item[Initialization.] 
At the start of iteration $2^0 = 1$ we have $\mathcal{P} = \emptyset$, so $\mathcal{I}(0)$ holds.

\item[Maintenance.] Assume $\mathcal{I}(k)$ holds for some $k \in \set{0, \dots, \ell - 1}$. 
We prove that $\mathcal{I}(k+1)$ holds. 
For ease of notation we define $\ell' = \ell - k - 1$.
Let $\mathcal{C}$ denote the set of candidate islands in this iteration---that is, the set of islands that do not intersect any island chosen before.
We claim that the $2^k$ largest islands of $\mathcal{C}$ are the set $V_{\ell'}$. Clearly, the islands in $V_{\ell'}$ do not intersect. Therefore, in the next $2^k$ iterations, disjoint-greedy iteratively chooses the islands $V_{\ell'}$, and $\mathcal{I}(k+1)$ holds.

\item[Termination.] We have proven, in particular, that $\mathcal{I}(\ell)$ holds. This implies that the partition $\mathcal{P}$ that disjoint-greedy returns contains the $2^{\ell - 1}$ islands of set $V$.

\item[Proof of the claim.] 
First note that each island in the set $V_{\ell'}$ has cardinality $2r_{\ell'} = 2^{\ell - k + 6}$. The islands that are part of $\mathcal{P}$ at the start of iteration $k$ induce $2^k$ non-disjoint regions as shown in Figure~\ref{fig:disjoint-greedy:regions}. We distinguish between inter- and intra-region islands. We arrange the islands $\mathcal{C} \setminus V_{\ell'}$ into three categories and prove each type of island has cardinality less than $2^{\ell - k + 6}$.

\begin{figure}[b]
    \centering
    \includegraphics[page=1]{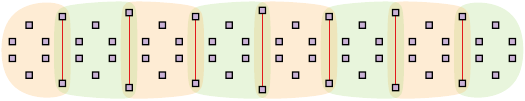}
    \caption{Regions for $k=3$}
    \label{fig:disjoint-greedy:regions}
\end{figure}

\begin{description}
\item[Blue intra-region island.]
Each region contains at most $2^{\ell' + 2} + 2$ purple squares.
Because each square contains at most fourteen blue points, there are at most $2^{\ell - k + 5} + 32$ blue points in any region, which is less than $2^{\ell - k + 6}$ for any $k \in \set{0, \dots, \ell - 1}$.

\item[Red island.]
Any red island that uses points from at most two purple squares contains strictly less points than those in $V_{\ell'}$. 
Any red island that uses points from more than two purple squares is significantly constrained by blue points.
In particular, observe that 
there exist $\varepsilon > 0$ and $0 < \delta < \varepsilon$ such that the following statements hold.
\begin{enumerate}
\item 
\label{observation:1}
Any set that contains red points from distinct $R^{\times}_{i, j_1}$, $R^{\times}_{i, j_2}$ and $R^{*}_{i', j_3}$, with $i \neq i'$ and ${\times}, {*} \in \{{+}, {-}\}$ is not an island.
\item 
\label{observation:2}
Any set that contains red points from distinct $R^{-}_{i, j}$, $R^{+}_{i, j}$ and $R^{\times}_{i', j'}$, with ${\times} \in \set{{-}, {+}}$ is not an island.
\item
\label{observation:3}
If $\ell' > 0$ then any set that contains red points from distinct $R^{\times_1}_{\ell', j_1}$, $R^{\times_2}_{\ell' - 1, j_2}$, $R^{\times_3}_{i, j_3}$ that are part of the same region, with ${\times}_r \in \set{{-}, {+}}$ and $j_2 \in \set{2 j_1, 2 j_1 + 1}$, is not an island.
\end{enumerate}
Consider an arbitrary red island in $\mathcal{C} \setminus V_{\ell'}$. If it contains red points from distinct $R^{\times}_{i, j_1}$ and $R^{\times}_{i, j_2}$, so from two purple squares of the same \textit{layer}, then by Observation~\ref{observation:1} it contains points only from purple squares of the same layer. 
This would imply the cardinality of the island is at most $r_{\ell'} < 2r_{\ell'}$.
Thus, assume the red island contains points from only one purple square per layer.
By Observation~\ref{observation:2}, if the red island contains points from $R_{i, j}^-$ and $R_{i, j}^+$, so from two \textit{opposing} purple squares, it cannot contain points from any other purple square.
This would result in a cardinality of at most $2r_{\ell' - 1} = r_{\ell'} < 2r_{\ell'}$, or zero if $\ell' = 0$.
Thus, suppose the red island does not contain points of opposing purple squares.
If the red island does not contain points from $R^{\times}_{\ell', j}$ for some $\times \in \set{-, +}$ and $j$, then it has cardinality at most
\[
2\sum_{i = 0}^{\ell' - 1} r_i \leq 2r_{\ell'} - 2 < 2r_{\ell'},
\]
so assume the red island contains points from one purple square $R^{\times}_{\ell', j}$.
If the red island contains points from a purple square $R^{*}_{\ell' - 1, j'}$, then by Observation~\ref{observation:3}, it cannot contain points from another purple square. 
This would result in an island of cardinality at most $r_{\ell'} + r_{\ell' - 1} < 2r_{\ell'}$.
Therefore, assume the red island does not contain points from a purple square $R^{*}_{\ell' - 1, j'}$.
Then, the red island has cardinality at most
\[
r_{\ell'} + 2\sum_{i = 0}^{\ell' - 2} r_i \leq r_{\ell'} + 2r_{\ell' - 1} - 2 = 2r_{\ell'} - 2 < 2r_{\ell'}.
\]

\item[Blue inter-region island.]
We claim that for sufficiently small $\varepsilon$, any blue inter-region island uses points from at most three purple squares. Thus, any such island has size at most 42, which is smaller than $2^{\ell - k + 6}$ for any $k \in \set{0, \dots, \ell - 1}$.
Next, we prove the claim.
Without loss of generality, assume that the blue inter-region island island uses points only from squares that all lie above the $x$-axis.
For squares that all lie below the $x$-axis the argument is symmetric. Blue inter-region islands that cross the $x$-axis clearly cannot contain more points than islands that do not.
To see why any blue inter-region island uses points from at most three purple squares, consider the ray starting at a blue point in a purple square at some height $j \in \set{0, \dots, \ell - 1}$ and going through a blue point in a second purple square at some greater height that is closest in terms of $x$-coordinate. The distance between the two points along the axes is $\Delta x \leq 2^{j} + \varepsilon$ and $\Delta y \geq 2^{-j} - \varepsilon$. Constrain $\varepsilon < 2^{-j-3}$, then $\Delta x < 2^j + 2^{-j-3}$ and $\Delta y > 2^{-j} - 2^{-j-3}$.
At least $a := 2^{j+1} -\varepsilon > 2^{j+1} - 2^{-j-3}$ units in the horizontal direction after the second purple square, the first purple square appears that is at a greater height than the second square the ray visited, if one exists in that direction. At that point, the ray is at $y$-coordinate at least
\[
y_\text{start} + \Delta y + \frac{\Delta y}{\Delta x} \cdot a > 2\frac{1}{2} - 2^{1-j} + 2^{-j} - 2^{-j-3} + \frac{2^{-j} - 2^{-j-3}}{2^j + 2^{-j-3}} \cdot (2^{j+1} - 2^{-j-3}).
\]
Above, $y_\text{start}$ is the $y$-coordinate of the square that the ray starts at. We verify that the ray exceeds $y$-coordinate $2\frac12$ at this point, implying that no purple squares lie above the ray. Rewriting the inequality
\[
y_\text{start} + \Delta y + \frac{\Delta y}{\Delta x} \cdot a > 2\frac12
\]
yields, after a sequence of standard arithmetic operations, the following inequality:
%
\[
2^{-2j-2} < \frac{5}{8}.
\]
This inequality holds for all $j \geq 0$.
This implies that any island that contains a blue point from a square at height at most $\ell'$ is an intra-region island (Figure~\ref{fig:disjoint-greedy:inter-intra} left).
Therefore, consider an island that contains blue points from squares at height greater than $\ell'$, if one exists.
Consider the square at the lowest height from which the island contains points.
Then it is clear that only points of the at most two purple squares of greater height that are closest to it may be part of the island (Figure~\ref{fig:disjoint-greedy:inter-intra} right).
Hence, indeed, any blue inter-region island has size at most~$42$.

\begin{figure}
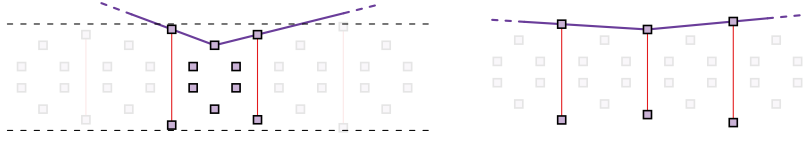

    \centering
    \includegraphics[page=2]{disjoint-greedy-2.pdf}
    \hspace{5mm}
    \includegraphics[page=3]{disjoint-greedy-2.pdf}
    \caption{Figures illustrating that any blue inter-region island uses points from at most three purple squares. Left: intra-region island; right: inter-region island.}
    \label{fig:disjoint-greedy:inter-intra}
\end{figure}

\end{description}

\end{description}
\end{proof}

\begin{lemma}\label{lemma:disjoint-greedy:lower}
    The disjoint-greedy algorithm has approximation ratio $\Omega(\frac{n}{\log^2{n}})$.
\end{lemma}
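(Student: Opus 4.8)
The plan is to combine the lower bound on disjoint-greedy's output from Lemma~\ref{lemma:disjoint-greedy:greedy-sol-lower} with a matching upper bound on $\OptDJ$, and then to translate the resulting ratio into a bound in terms of $n$. Concretely, for each $\ell \in \N_{\geq 1}$ I would fix $\varepsilon$ and $\delta$ as guaranteed by Lemma~\ref{lemma:disjoint-greedy:greedy-sol-lower}, run disjoint-greedy on $\FlatTree{\ell, \varepsilon, \delta}$, and control the three quantities that determine the approximation ratio: the size of the returned partition, the optimum $\OptDJ$, and the total number of points $n$. The lower bound is immediate from the previous lemma, the upper bound requires constructing a good partition, and the final conversion is routine asymptotics.

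For the size of the greedy partition, Lemma~\ref{lemma:disjoint-greedy:greedy-sol-lower} states that the returned partition $\mathcal{P}$ is a superset of $V$, so it suffices to count $V$. Since $V$ contains one island $V_{i,j}$ for every height $i \in \set{0, \dots, \ell-1}$ and every $j \in \set{0, \dots, 2^{\ell-1-i}-1}$, we get $|\mathcal{P}| \geq |V| = \sum_{i=0}^{\ell-1} 2^{\ell-1-i} = 2^\ell - 1 = \Omega(2^\ell)$.

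Next I would upper-bound $\OptDJ$ by exhibiting the alternative partition sketched in Figure~\ref{fig:disjoint-greedy:lower:optimal}. The key observation is that all purple squares at a fixed height $i$ share the same $y$-coordinate $2\frac12 - 2^{1-i}$, so the red points of an entire layer lie in a thin horizontal strip and can be grouped into a constant number of monochromatic red islands; the blue points of that layer can likewise be covered by a constant number of blue islands. Summing over the $\ell$ heights in each of the two trees yields $O(\ell)$ islands, hence $\OptDJ = O(\ell)$. I expect this step to be the main obstacle: one must verify that each such horizontal group really is an island (its convex hull meets no foreign points) and, more delicately, that the convex hulls of islands from different layers are pairwise disjoint, which relies on consecutive layers being vertically separated by $2^{-i} - \varepsilon > 0$ for sufficiently small $\varepsilon$, so that the strips can be cleanly peeled off one height at a time.

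Finally I would count points and combine the bounds. At height $i$ there are $2^{\ell-1-i}$ squares per tree side, each carrying $r_i = 2^{i+6}$ red points and $O(1)$ blue points, so a single layer contains $2^{\ell-1-i} \cdot 2^{i+6} = 2^{\ell+5} = \Theta(2^\ell)$ points regardless of $i$. With $\Theta(\ell)$ layers this gives $n = \Theta(\ell \cdot 2^\ell)$, whence $2^\ell = \Theta(n/\ell)$ and $\ell = \Theta(\log n)$. The approximation ratio on this instance is therefore
\[
\frac{|\mathcal{P}|}{\OptDJ} \geq \frac{\Omega(2^\ell)}{O(\ell)} = \Omega\!\left(\frac{2^\ell}{\ell}\right) = \Omega\!\left(\frac{n}{\ell^2}\right) = \Omega\!\left(\frac{n}{\log^2 n}\right).
\]
Letting $\ell \to \infty$ produces a family of instances of unbounded size witnessing this ratio, which establishes the claimed lower bound on the approximation ratio of disjoint-greedy.
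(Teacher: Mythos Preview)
Your proposal is correct and follows essentially the same argument as the paper: invoke Lemma~\ref{lemma:disjoint-greedy:greedy-sol-lower} to get $|\mathcal{P}| \geq |V| = \Omega(2^\ell)$, upper-bound $\OptDJ$ by the layer-by-layer partition of Figure~\ref{fig:disjoint-greedy:lower:optimal} with $O(\ell)$ islands, count $n = \Theta(\ell\cdot 2^\ell)$, and convert. You are in fact slightly more explicit than the paper (which simply asserts ``at most three islands per layer'' by pointing at the figure), and you correctly identify that verifying this alternative partition is a genuine island partition is the only step requiring care; the paper does not spell this out either.
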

\begin{proof}
Let $\ell$ be arbitrary and let $\varepsilon$ and $\delta$ be such that Lemma~\ref{lemma:disjoint-greedy:greedy-sol-lower} holds. Consider input instance $\FlatTree{\ell, \varepsilon, \delta}$. Lemma~\ref{lemma:disjoint-greedy:greedy-sol-lower} implies that disjoint-greedy returns a partition with $\Omega(2^\ell)$ islands. A solution as shown in Figure~\ref{fig:disjoint-greedy:lower:optimal}, which uses at most three islands per layer, has cardinality $O(\ell)$. Thus, the approximation ratio of disjoint-greedy is $\Omega(2^\ell/\ell)$. 
The total number of points $n$ is $\Theta(\ell \cdot 2^\ell)$ as the height of the trees is $\ell$ and there are $\Theta(2^\ell)$ points at the same height.
Writing $2^\ell$ and $\ell$ as functions of $n$, it is readily seen that $2^\ell = \Omega(n/\log{n})$ and $\ell = O(\log{n})$. 
Thus, the approximation ratio of disjoint-greedy is $\Omega(n/\log^2{n})$.
\end{proof}

\section{Overlap-Greedy}\label{sec:overlap-greedy}
A greedy algorithm that iteratively picks the island that covers most uncovered points results in an $O(\log{n})$ approximation to the minimum-cardinality island \emph{cover}.
This follows immediately from viewing the problem as a set cover problem, where islands form the sets.
We refer to this greedy algorithm as \textit{overlap-greedy}. 
Below, we explore how to transform the island cover returned by overlap-greedy into an island partition.
We assume that the greedy algorithm breaks ties by choosing an island that covers the fewest previously covered points.

We first define the relation between island covers and partitions based on them. 
Intuitively, islands that intersect can be transformed into a set of pairwise-disjoint islands by splitting them. 
In the transformation, each island has a corresponding family of islands into which it is split. 
The union of such a family should be a subset of the original island---a subset, not equal, because a point that was originally covered by multiple islands should be part of exactly one island after the transformation.
This motivates the following definition.
\begin{definition}[Compatible]
Families $\mathcal{I}'_1, \dots, \mathcal{I}'_{m}$ are \textit{compatible} with islands $I_1, \dots, I_m$ if:
\begin{itemize}
    \item Families $\mathcal{I}'_1, \dots, \mathcal{I}'_{m}$ cover the same points as $I_1, \dots, I_m$: $\bigcup_k \bigcup \mathcal{I}'_k = \bigcup_i I_i$;
    \item For every $i$, we have $\bigcup\mathcal{I}'_i \subseteq I_i$;
    \item Islands $\bigcup_k \mathcal{I}'_k$ are pairwise-disjoint.
\end{itemize}
Islands $I'_1, \dots, I'_{m'}$ are compatible with islands $I_1, \dots, I_m$ if there exists a partition of $\set{I'_1, \dots, I'_{m'}}$ into families that are compatible with $I_1, \dots, I_m$.
\end{definition}
Thus, we arrive at the following problem: given $m$ islands $I_1, \dots, I_m$ obtained by overlap-greedy, find compatible families $\mathcal{I}'_1, \dots, \mathcal{I}'_{m}$ with $|\bigcup_k \mathcal{I}_k'|$ minimum.
Solving this problem optimally is non-trivial. A natural approach to tackle the problem is to create an arrangement of the islands $I_1, \dots, I_m$ and extract compatible families from that arrangement. However, two minor issues arise: the faces in the arrangement may not be convex, and the quality of the solution is not immediately clear as the number of faces in the arrangement may be arbitrarily greater than $m$.
To resolve these issues, we build an arrangement iteratively. Before describing this process, we give a lower bound on the approximation ratio of algorithms that return islands compatible with those returned by overlap-greedy.

\begin{lemma}
Any algorithm that returns islands compatible with those returned by overlap-greedy has approximation ratio $\Omega(\max\{\OptDJ, \sqrt{n}\})$.
\end{lemma}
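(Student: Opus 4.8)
The plan is to prove both lower bounds with a single parameterized family of instances $\mathcal{F}_m$ (with a parameter $m$) for which $\OptDJ = \Theta(m)$ while $n = \Theta(m^2)$, and then to show that overlap-greedy is \emph{forced} to return $\Theta(m)$ large, \emph{pairwise-overlapping} islands, whereas any family compatible with them must contain $\Omega(m^2)$ islands. Since in this family $\sqrt{n} = \Theta(m) = \Theta(\OptDJ)$, a ratio of $\Omega(m)$ simultaneously witnesses $\Omega(\sqrt{n})$ and $\Omega(\OptDJ)$, hence $\Omega(\max\{\OptDJ,\sqrt{n}\})$. The whole difficulty is concentrated in making overlap-greedy's committed islands \emph{overlap heavily} (so that the compatible split is expensive) while keeping the underlying point set \emph{easy to partition} by entirely different, disjoint islands (so that $\OptDJ$ stays $\Theta(m)$).

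The gadget I would use exploits the fact that compatibility forces each output island to be contained in a single greedy island: a compatible partition may refine $I_1,\dots,I_m$ but may never \emph{regroup} points that overlap-greedy placed into different islands, whereas $\OptDJ$ has no such restriction. Accordingly I would arrange $m$ ``long, thin'' islands whose convex hulls pairwise intersect in point-free regions (for instance, realized by points in convex position so that the relevant chords pairwise cross, with the crossing regions free of points so that each island remains valid, i.e.\ $\CH{I_i}\cap S = I_i$). Using several colors (the problem only requires $k\ge 2$) makes the point-free overlaps easy to certify, since islands of different colors may freely overlap as long as no point lies in the shared region. The point counts and the stated tie-breaking rule (break ties toward the island covering the \emph{fewest} previously covered points) must be tuned so that these $m$ mutually overlapping islands are precisely the maximum-coverage cover overlap-greedy selects; simultaneously, the points are laid out so that a genuinely \emph{different}, pairwise-disjoint family of $\Theta(m)$ islands---each combining points drawn from several of the greedy islands---covers everything.

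With the gadget in place, the two bounds follow from a counting argument and an explicit construction. For the upper bound on $\OptDJ$, I would simply exhibit the disjoint family of $\Theta(m)$ islands described above, giving $\OptDJ = O(m)$. For the lower bound on any compatible family, I would use an arrangement/crossing argument: the $\Theta(m)$ greedy islands have $\Theta(m^2)$ pairwise overlaps, and because every compatible piece lies inside one greedy island, two pieces coming from islands $I_i,I_j$ can be disjoint only if at least one of them is cut so as not to span their overlap. Making \emph{all} $\Theta(m^2)$ overlaps simultaneously conflict-free therefore forces the pieces to refine down to (essentially) the cells of the arrangement, yielding $\Omega(m^2)$ islands in $\bigcup_k \mathcal{I}'_k$. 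Dividing the two estimates gives ratio $\Omega(m^2)/O(m) = \Omega(m)$.

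The main obstacle, and the step I expect to absorb most of the work, is the second half of the decoupling: proving the $\Omega(m^2)$ fragmentation bound for the compatible split while \emph{at the same time} guaranteeing that the free partition achieving $\OptDJ = O(m)$ exists and is disjoint. These pull in opposite directions---the same points must be ``hard to separate when constrained to lie inside the overlapping greedy islands'' yet ``easy to partition when one is allowed to regroup them''---so the placement must be engineered with care, and the fragmentation lower bound must be argued robustly (e.g.\ via a charging scheme that assigns a distinct piece to each pairwise overlap rather than relying on a specific cutting order). A secondary, more routine obstacle is verifying that overlap-greedy, under the given tie-breaking rule, provably commits to the intended overlapping islands; this reduces to a comparison of coverage counts at each iteration and should follow once the cluster sizes are fixed.
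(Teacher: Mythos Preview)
Your high-level strategy is exactly the paper's: build a family where overlap-greedy commits to $\Theta(m)$ pairwise-crossing islands, a compatible partition must fragment into $\Omega(m^2)$ pieces, yet an unconstrained partition of size $\Theta(m)$ exists. What is missing is the actual construction, and you substantially overestimate the difficulty of supplying one.

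The paper's instance is a plain two-color grid: $k$ horizontal ``lines'' of $k{+}1$ red points each and $k$ vertical ``lines'' of $k{+}1$ blue points each, with the points offset so that the $k^2$ crossing locations contain no input point. This dissolves every obstacle you flag. First, any monochromatic island is confined to a single line (a blue island using two vertical lines would trap a red point), so the maximum island has exactly $k{+}1$ points and overlap-greedy is forced to pick precisely the $2k$ full lines; no delicate tuning of sizes or tie-breaking is needed. Second, the $\Omega(k^2)$ fragmentation bound is a one-line counting argument: at each of the $k^2$ crossings, the adjacent pair of red points and the adjacent pair of blue points cannot both stay together in a compatible partition (else two disjoint pieces would share the crossing), and summing these forced cuts gives at least $k^2+2k$ pieces. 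Third, $\OptDJ \le 2k{+}1$ is witnessed by taking all $k$ horizontal red lines plus the $k{+}1$ horizontal blue strips between them, so the ``decoupling'' you worry about is immediate. Two colors suffice throughout; the suggestion to use several colors is unnecessary. In short, your plan is sound but the execution is far simpler than you anticipate---replace the abstract ``long thin islands in convex position'' with an explicit axis-aligned grid and the proof is a paragraph.
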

\begin{proof}
Let $k \in \N_{\geq 1}$. Consider a problem instance that consists of $k$ evenly spaced vertical blue lines each formed by $k + 1$ evenly spaced blue points, and symmetrically $k$ evenly spaced horizontal red lines (Figure~\ref{fig:overlap-greedy-compatible-lower}). The cover returned by overlap-greedy has cardinality $2k$ and is induced by exactly those lines that were just described (Figure~\ref{fig:overlap-greedy-compatible-lower}, left). Any partition that is compatible with the overlap-greedy cover has cardinality at least $2k + k^2$ (Figure~\ref{fig:overlap-greedy-compatible-lower}, middle). Indeed, each intersection between two lines in the overlap-greedy cover forces an additional island in a compatible partition. An optimal island partition has cardinality $\OptDJ = 2k+1$ and consists of either all horizontal or all vertical islands (Figure~\ref{fig:overlap-greedy-compatible-lower}, right).
The number of points $n = O(k^2)$. 
Thus, the approximation ratio of an algorithm that produces solutions compatible with that of overlap-greedy is at least $\frac{2k+k^2}{2k+1} = \Omega(k) = \Omega(\OptDJ) = \Omega(\sqrt{n}) = \Omega(\max\{\OptDJ, \sqrt{n}\})$.
\end{proof}

\begin{figure}[tb]
    \hspace*{\fill}
    \includegraphics[page=2]{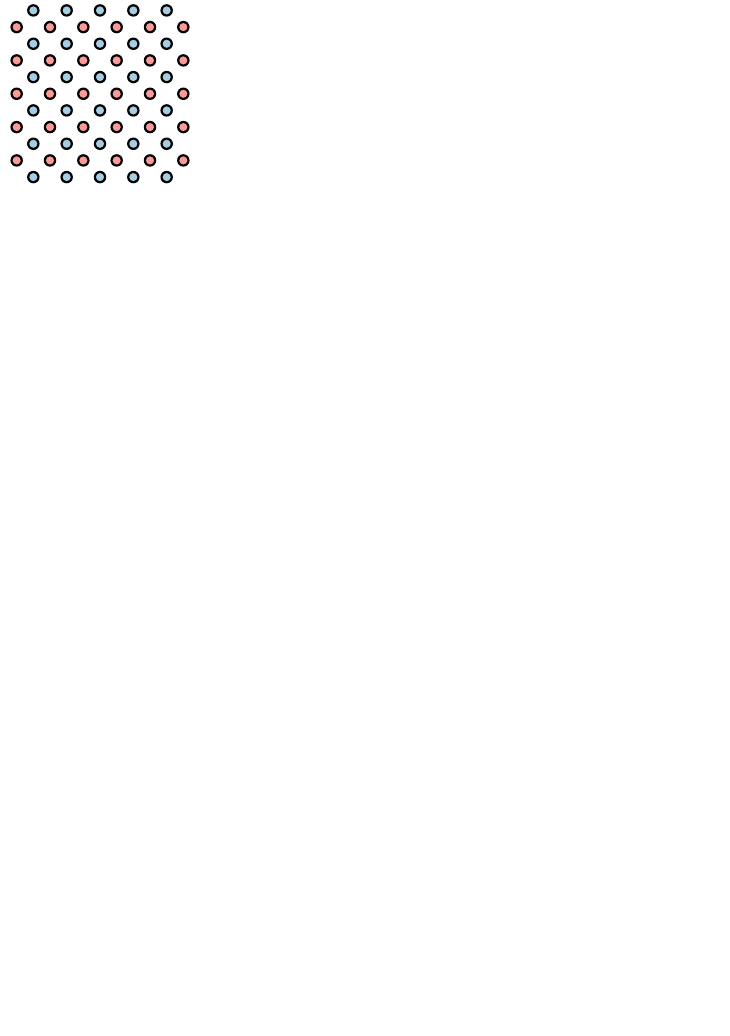}
    \hfill
    \includegraphics[page=3]{overlap-greedy-compatible-lower.pdf}
    \hfill
    \includegraphics[page=4]{overlap-greedy-compatible-lower.pdf}
    \hspace*{\fill}
    \caption{
        A lower bound on the cardinality of solutions compatible with islands returned by overlap-greedy.
        Left: overlap-greedy cover; middle: a partition compatible with the overlap-greedy cover; right: an optimal island partition.
    }
    \label{fig:overlap-greedy-compatible-lower}
\end{figure}

\subsection{Upper Bound}
In this section, we describe an algorithm for transforming an island cover of cardinality $q$ returned by overlap-greedy into a compatible island partition of cardinality $O(q^2 \OptDJ)$. 
The proof of this bound requires a lower bound on the cardinality $\OptDJ$ of the optimal partition, which we state first.

Let $P$ be a set of points in convex position and let $p_1, \dots, p_n$ be a cyclic ordering of $P$ agreeing with $\delta(\CH{P})$, the boundary of the convex hull of $P$. 
Then we say that $P$ has alternating colors if for any $i$ the color of $p_i$ is distinct from the color of its predecessor and successor in the cyclic order.
The following holds.

\begin{lemma}\label{lemma:kgon}
    If an instance contains $2k$ points in convex position with alternating colors, then $\OptDJ \geq k + 1$, where $\OptDJ$ denotes the cardinality of the optimal island partition of the instance.
\end{lemma}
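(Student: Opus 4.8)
The plan is to reduce the geometric statement to a purely combinatorial fact about partitions of a cyclic sequence. Fix an optimal island partition and let $C = \set{p_1, \dots, p_{2k}}$ be the $2k$ points in convex position, listed in the cyclic order along $\delta(\CH{C})$. Each point of $C$ lies in exactly one island, so the islands that meet $C$ induce a partition of $C$ into blocks, one block per island; clearly $\OptDJ$ is at least the number of blocks. I would establish two properties of this block partition. First, it is \emph{proper}: no two cyclically adjacent points of $C$ lie in the same block, because adjacent points have distinct colors (alternating colors) while islands are monochromatic. Second, it is \emph{non-crossing}: if points $a, b$ of one island and $c, d$ of a different island appeared in the interleaving cyclic order $a, c, b, d$, then the chords $ab$ and $cd$ would cross; since these chords lie in the respective convex hulls, the two islands would intersect, contradicting that the hulls of a partition are pairwise-disjoint.

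It then suffices to prove the combinatorial claim that every proper non-crossing partition of a cyclic sequence of $N$ points has at least $\lceil N/2\rceil + 1$ blocks, which for $N = 2k$ gives the desired bound $k + 1$. I would prove this by induction on $N$. If every block is a singleton then there are $N \ge \lceil N/2\rceil + 1$ blocks. Otherwise some block has size at least two, which by properness forces $N \ge 4$, and among all pairs of points that are consecutive within their block I would choose an \emph{innermost} chord $uv$: one whose enclosed arc contains no point of any other non-singleton block. By the non-crossing property every point strictly inside this arc is then a singleton, and by properness there is at least one such point; let $g \ge 1$ denote their number.

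The reduction step removes points so as to decrease the block count by exactly one while preserving properness and the non-crossing property. If $g \ge 2$, I delete a single interior singleton, passing to a cycle of $N - 1$ points with one fewer block; the only newly created adjacency is between two points that are not in a common block, so properness is maintained. If $g = 1$, deleting the single interior singleton alone would make $u$ and $v$ adjacent in the same block, so instead I delete that singleton together with the endpoint $v$, passing to a cycle of $N - 2$ points with one fewer block; here the new neighbour of $u$ cannot lie in $u$'s block, again preserving properness. Applying the induction hypothesis to the reduced instance yields the claimed bound in both cases.

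Combining the two parts, the block partition induced on $C$ has at least $k + 1$ blocks, hence the optimal island partition uses at least $k + 1$ islands, i.e.\ $\OptDJ \ge k + 1$. The main obstacle is the combinatorial core rather than the geometry: the reduction to a non-crossing partition is routine once the crossing characterisation is in place, whereas the inductive step must invoke properness carefully, and the split into the cases $g = 1$ and $g \ge 2$ is exactly what keeps both the parity and the properness invariant under deletion.
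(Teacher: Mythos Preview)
Your argument is correct. The paper itself does not prove this lemma; it simply cites Theorem~1 of Dumitrescu and Pach~\cite{monochromatic-parts} and notes that their proof for points on a circle carries over to convex position. Your reduction to a combinatorial statement about proper non-crossing partitions of a cycle, followed by the innermost-chord induction, is a clean self-contained proof that does exactly this job.

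Two minor remarks. First, your general inequality $N\ge\lceil N/2\rceil+1$ fails at $N=1$, but your induction only descends from values $N\ge 4$ (where a non-singleton block exists) to $N-1$ or $N-2$, so the base cases $N\in\{2,3\}$---in which properness forces all blocks to be singletons---are all that is needed; you may want to state the range $N\ge 2$ explicitly. Second, in the $g=1$ case you should also note that removing $v$ from its block cannot create a crossing and that the block containing $u$ does not vanish (it still contains $u$), so the block count drops by exactly one; both are immediate, but worth a sentence since they are what makes the bookkeeping go through.
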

    This lemma is proven by Dumitrescu and Pach~\cite{monochromatic-parts} (Theorem~1). Though their theorem concerns points on a circle, the proof holds more generally for points lying in convex position.

As mentioned earlier, our algorithm for creating a compatible island partition from an island cover works in an iterative manner. Throughout the iterations, we keep track of a restricted planar subdivision, which we call an \emph{island arrangement}, to bound the cardinality of the island partition constructed by the algorithm. To simplify the arguments, we assume all islands in the island cover have cardinality at least three and that no three points are collinear. Then, a compatible island partition can be created from the faces of the island arrangement.
See Figure~\ref{fig:bold-example} for an overview of the transformation from island cover to a compatible island partition.

We now define the notion of an \emph{island arrangement}.
In the following, vertices, edges, and faces of a planar subdivision are collectively referred to as \emph{features}.

\begin{figure}[t]
    \centering
    \hspace*{\fill}
    \includegraphics[page=1]{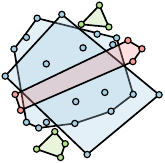}
    \hfill
    \includegraphics[page=2]{bold-example.pdf}
    \hfill
    \includegraphics[page=3]{bold-example.pdf}
    \hspace*{\fill}
    \caption{Left: an island cover $I_1, \dots I_m$ returned by overlap-greedy; middle: an island arrangement of $I_1, \dots, I_m$; right: an island partition compatible with $I_1, \dots, I_m$ induced by the arrangement. 
    }
    \label{fig:bold-example}
\end{figure}


\begin{definition}[Island arrangement]\label{def:island-arrangement}
An island arrangement of islands $I_1, \dots, I_i$ is a planar subdivision with the following additional requirements:
\begin{itemize}
    \item Bounded faces are convex;
    \item Every bounded feature is a subset of $\CH{I_j}$ for some $1 \leq j \leq i$;
    \item For every $1 \leq j \leq i$, $\CH{I_j}$ is covered by bounded features.
\end{itemize}
\end{definition}

Let $I_1, \dots, I_m$ be the islands chosen by overlap-greedy for some set of points $S$. Let $U_i = I_i \setminus \bigcup_{j < i} I_j$ be the set of uncovered points island $I_i$ covers. Because islands are chosen greedily by overlap-greedy, they satisfy $|U_i| \geq |U_{i+1}|$ for $i \in \set{1, \dots, m-1}$ and for all $i$ island $I_i$ is such that $|U_i|$ is maximum. By using these properties, we prove the following lemma.
\begin{restatable}{lemma}{giib}\label{lemma:greedy-island-intersection-bound}
Let $\delta$ denote the boundary operator on sets. For distinct $1 \leq i < j \leq m$, the number of intersections between $\delta(\CH{I_i})$ and $\delta(\CH{I_j})$ is at most $2\OptDJ$.
\end{restatable}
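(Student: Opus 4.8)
The plan is to reduce the statement to Lemma~\ref{lemma:kgon} by converting a large number of boundary crossings into many points in convex position with alternating colors. After a generic perturbation (justified since no three points are collinear), the two convex boundaries cross transversally, so the number of intersections is even, say $2t$, and it suffices to prove $t \le \OptDJ$. The $2t$ crossing points split $\delta(\CH{I_i \cup I_j})$ into $2t$ arcs that alternate between parts of $\delta(\CH{I_i})$ lying outside $\CH{I_j}$ (the \emph{$I_i$-tips}) and parts of $\delta(\CH{I_j})$ lying outside $\CH{I_i}$ (the \emph{$I_j$-tips}), with $t$ tips of each kind. My goal is to select one point per tip so that the resulting $2t$ points lie in convex position and alternate in color; Lemma~\ref{lemma:kgon} (with $2k = 2t$) then gives $\OptDJ \ge t+1 > t$, and hence at most $2t \le 2\OptDJ$ crossings.

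First I would treat the case where $I_i$ and $I_j$ have different colors. Here the island property $\CH{I}\cap S = I$ implies that no point of $I_i$ can lie in $\CH{I_j}$ (it would have to belong to $I_j$, contradicting the colors), and symmetrically; hence every vertex of $\CH{I_i}$ lies on an $I_i$-tip and every vertex of $\CH{I_j}$ on an $I_j$-tip. In particular a tip cannot consist of a single edge, so each tip carries at least one genuine vertex of its island, and these vertices occur in single-colored blocks around $\delta(\CH{I_i \cup I_j})$. Choosing from each tip a vertex that lies on $\delta(\CH{I_i \cup I_j})$ then yields $2t$ points in convex position whose colors alternate.

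For the case where $I_i$ and $I_j$ share a color I would invoke the greedy choice. Since $(\CH{I_i}\cup\CH{I_j})\cap S = I_i \cup I_j$, if the notches $\CH{I_i \cup I_j}\setminus(\CH{I_i}\cup\CH{I_j})$ contained no point of the other colors, then $I_i \cup I_j$ would itself be a monochromatic island, and it would be a candidate at step $i$. As $I_i \cup I_j \supseteq I_i \supseteq U_i$, it cannot cover fewer uncovered points than $I_i$; maximality of $|U_i|$ forbids it covering strictly more, so every point of $I_j \setminus I_i$ is already covered before step $i$. Together with the fact that points of $I_j$ inside $\CH{I_i}$ lie in $I_i$, this forces $U_j = \emptyset$, contradicting that overlap-greedy picks $I_j$ to cover a positive number of uncovered points. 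Thus each notch between two consecutive same-colored tips contains an oppositely colored point, and inserting these points between the tip vertices restores a cyclic alternation of colors along the convex hull of $I_i \cup I_j$ together with the notch points.

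The main obstacle is the convex-position extraction: I must guarantee that each of the $2t$ tips (and, in the monochromatic case, each separating notch point) actually contributes a vertex to the relevant convex hull, so that no tip is ``hidden'' inside the hull and the cyclic color alternation survives. I expect to handle the generic case by selecting, for each tip, its extreme vertex in the outward normal direction of that tip and arguing from the convexity of each island that this choice lands on $\delta(\CH{I_i \cup I_j})$; for the monochromatic case, ruling out hidden tips is where the greedy maximality of $|U_i|$ and the tie-breaking rule do the real work, since a recessed tip would again expose a larger admissible island. Once convex position and alternation are secured, Lemma~\ref{lemma:kgon} closes the argument immediately.
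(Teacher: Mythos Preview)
Your overall strategy matches the paper's: exhibit $x$ points in convex position with alternating colors and invoke Lemma~\ref{lemma:kgon}. The different-color case is essentially the paper's argument, and you correctly flag the convex-position extraction as the step needing care.

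The same-color case, however, has a genuine gap. From ``if no notch contains an oppositely colored point then $I_i\cup I_j$ would be a monochromatic island and overlap-greedy would not have chosen $I_i$, $I_j$ as it did'' you may conclude only that \emph{some} notch carries an oppositely colored point, not that \emph{every} notch does. Your sentence ``Thus each notch between two consecutive same-colored tips contains an oppositely colored point'' is a non sequitur: the global contradiction disappears as soon as a single notch is occupied, and nothing in the argument then forces the remaining notches to be non-empty. Without one obstacle point per notch you cannot build the alternating cyclic sequence you need.

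The paper avoids this by using the greedy maximality \emph{locally} rather than globally. It works with the connected components $C_\ell$ of $\delta(\CH{I_i})\cap\CH{I_j}$ (the $t$ arcs of $\delta(\CH{I_i})$ inside $\CH{I_j}$). Beyond each $C_\ell$ there is some $p\in I_j\setminus I_i$; since $I_i$ was the greedy choice at step~$i$, the single-point extension $I_i\cup\{p\}$ cannot be a monochromatic island, which forces an oppositely colored point into the (at most two) triangles $\CH{I_i\cup\{p\}}\setminus\CH{I_i}$ adjacent to $C_\ell$. This yields one obstacle point per component, hence $t$ of them, which together with $t$ suitably chosen extreme vertices of $I_i$ give $x=2t$ alternating points. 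You are reaching for $2t$ obstacle points (one per notch) via a single global contradiction, when $t$ obtained by a per-component argument suffice and are actually attainable. Your closing remark that ``a recessed tip would again expose a larger admissible island'' is pointing in exactly this local direction, but you have not connected it to the per-notch claim where it is really needed.

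A minor notational slip: the $2t$ crossing points sit on the boundary of the region $\CH{I_i}\cup\CH{I_j}$, not on $\delta(\CH{I_i\cup I_j})$; these differ whenever the union is non-convex, which is precisely the situation of interest.
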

\begin{proof}

Without loss of generality, assume points have at most two colors: red and blue, and assume $I_j$ is blue. Furthermore, because the lemma is trivial if $\CH{I_i}$ and $\CH{I_j}$ are disjoint, assume they intersect. We perform a case distinction on the color of $I_i$ to bound the number of intersections $x$ between $\delta(\CH{I_i})$ and $\delta(\CH{I_j})$.

\begin{figure}[tb]
    \begin{subfigure}[t]{0.2\textwidth}
        \includegraphics[page=1]{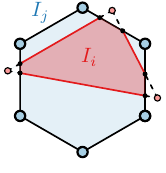}
        \caption{Case $I_i$ is red}
        \label{fig:overlap-greedy-upper:red1}
    \end{subfigure}
    \hfill
    \begin{subfigure}[t]{0.24\textwidth}
        \includegraphics[page=3]{overlap-greedy-upper-new.pdf}
        \caption{Case $I_i$ is red: points in convex position}
        \label{fig:overlap-greedy-upper:red2}
    \end{subfigure}
    \hfill
    \begin{subfigure}[t]{0.2\textwidth}
        \includegraphics[page=5]{overlap-greedy-upper-new.pdf}
        \caption{Case $I_i$ is blue}
        \label{fig:overlap-greedy-upper:blue1}
    \end{subfigure}
    \hfill
    \begin{subfigure}[t]{0.24\textwidth}
        \includegraphics[page=4]{overlap-greedy-upper-new.pdf}
        \caption{Case $I_i$ is blue: points in convex position}
        \label{fig:overlap-greedy-upper:blue2}
    \end{subfigure}

    \caption{Ways in which two islands $I_i$ and $I_j$ chosen by overlap-greedy can intersect.}
    \label{fig:overlap-greedy-upper}
\end{figure}

\begin{description}
\item[Island $I_i$ is red.] 
An example is depicted in Figure~\ref{fig:overlap-greedy-upper:red1}. 
Observe that no points of $I_i$ can be in $\CH{I_j}$ and no points of $I_j$ can be in $\CH{I_i}$. 
This implies that only parts of edges of $\delta(\CH{I_i})$ can intersect $\CH{I_j}$ as in Figure~\ref{fig:overlap-greedy-upper:red1}.
Each part $C_\ell$ is a connected component of $\delta(\CH{I_i}) \cap \CH{I_j}$.
We now argue that $x \leq 2\OptDJ$.
Consider the following set of points.
For every segment $C_\ell$, pick a blue point of $I_j$ on $\delta(\CH{I_j})$ that lies in the strip extending from $C_\ell$ outward from $\CH{I_i}$. 
Order the blue points $b_1, b_2, \dots$, according to their position on $\delta(\CH{I_j})$. 
For every two consecutive $b_i, b_{i+1}$, pick an extreme red point of $I_i$ that lies in the direction perpendicular to $b_i b_{i+1}$ outward from $\CH{I_j}$. 
The blue and red points form $x$ points in convex position with alternating colors (Figure~\ref{fig:overlap-greedy-upper:red2}), so from Lemma~\ref{lemma:kgon} follows that $x \leq 2\OptDJ$.

    

\item[Island $I_i$ is blue.]
If $\CH{I_i}$ cuts only one side of $\CH{I_j}$, the argument is trivial, because $x = 2$ and $\OptDJ \geq 1$, so indeed $x \leq 2\OptDJ$. Therefore, assume that $I_i$ cuts multiple sides of $I_j$ as shown in Figure~\ref{fig:overlap-greedy-upper:blue1}. Each connected component $C_\ell$ of $\delta(\CH{I_i}) \cap \CH{I_j}$ is a chain of vertices $v_1^\ell, \dots, v_m^\ell$ with $m \geq 2$. 
We bound $x$ by applying Lemma~\ref{lemma:kgon} on appropriate points.
Because $I_i$ is maximal, for any point $p \in I_j \setminus I_i$, set $I_i \cup \set{p}$ is not an island. Thus, there must be a red point in $\CH{I_i} \cup \set{p}$ for each $p$. Such a red point lies in one of at most two triangles, as illustrated in Figure~\ref{fig:overlap-greedy-upper:blue1}. For each connected component $C_\ell$, pick one such red point in the half-plane extending from the line through $v_1^\ell v_m^\ell$ outward from $\CH{I_i}$. Order the red points based on their cyclic order on $\delta(\CH{I_i})$ after projecting them. For every two consecutive red points $r_i, r_{i+1}$, pick an extreme point of $I_i$ that lies in the direction perpendicular to $r_i r_{i+1}$ outward from $\CH{I_i}$. We have picked $x$ points that lie in convex position with alternating color (Figure~\ref{fig:overlap-greedy-upper:blue2}). Thus, Lemma~\ref{lemma:kgon} implies that $x \leq 2\OptDJ$.

\end{description}
\end{proof}

Next we show how an island arrangement of $I_1, \dots, I_{i-1}$ with $1 \leq i \leq m$ can be modified into an island arrangement of $I_1, \dots, I_i$ such that the increase in the number of faces is bounded in terms of $i$ and $\OptDJ$. We call this modification an \textit{augmentation} of the arrangement. The following lemma makes one face for the new island $I_i$ and modifies any existing features to make room. We refer to this as a \textit{bold augmentation} of the arrangement. 

\begin{restatable}[Bold augmentation]{lemma}{boldAugmentation}\label{lemma:bold:augmentation}
    Given an island arrangement $A$ of $I_1, \dots, I_{i - 1}$ with $f$ faces, there exists an island arrangement $A'$ of $I_1, \dots, I_{i}$ with at most $f + 2\OptDJ \cdot (i - 1)$ faces such that there is exactly one face whose closure equals $\CH{I_{i}}$.
\end{restatable}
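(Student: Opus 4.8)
The plan is to build $A'$ by overlaying the boundary $\delta(\CH{I_i})$ of the new island onto $A$, declaring the single region $\CH{I_i}$ to be one face, and then repairing the part of the subdivision lying outside $\CH{I_i}$. Concretely, I would first delete every bounded feature of $A$ contained in the open region $\mathrm{int}(\CH{I_i})$ and replace it by a single convex face whose closure is $\CH{I_i}$; this immediately yields the required distinguished face, and it is legal since $\CH{I_i}$ is convex and is trivially a subset of itself. All remaining work is in the exterior: each face $F$ of $A$ that $\delta(\CH{I_i})$ crosses is clipped to $F \setminus \mathrm{int}(\CH{I_i})$, the old edges and vertices outside $\CH{I_i}$ are retained, the arcs of $\delta(\CH{I_i})$ are added as new edges, and the exterior is further subdivided wherever convexity fails.

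Verifying that $A'$ satisfies Definition~\ref{def:island-arrangement} is the routine part. Convexity of bounded faces is enforced by the repair step. Every new bounded feature is either the face $\CH{I_i}$ or an arc of $\delta(\CH{I_i})$, hence a subset of $\CH{I_i}$, while the old features kept outside $\CH{I_i}$ remain subsets of the islands they came from, so every bounded feature is a subset of some $\CH{I_j}$ with $j \leq i$. For coverage, each $\CH{I_j}$ with $j < i$ splits into $\CH{I_j} \cap \CH{I_i}$, now covered by the face $\CH{I_i}$, and the pocket $\CH{I_j} \setminus \CH{I_i}$, covered by the clipped old faces; and $\CH{I_i}$ is covered by its own face. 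For the face count I would maintain as an auxiliary invariant that every arrangement edge lies on some island boundary $\delta(\CH{I_j})$, so that the crossings of $\delta(\CH{I_i})$ with $A$ are exactly its crossings with the boundaries $\delta(\CH{I_j})$, $j < i$. By Lemma~\ref{lemma:greedy-island-intersection-bound} each such boundary is crossed at most $2\OptDJ$ times, so $\delta(\CH{I_i})$ is divided into at most $2\OptDJ \cdot (i-1)$ arcs, and I would charge each newly created exterior face to one of these arcs.

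The main obstacle is precisely the convexity repair, and reconciling it with this per-island budget. A clipped piece $F \setminus \mathrm{int}(\CH{I_i})$ is in general non-convex: a vertex of $\CH{I_i}$ can lie strictly inside $F$, which makes the chain $\delta(\CH{I_i}) \cap F$ bend inward and turns the pocket into a crescent with reflex vertices. Note that this configuration is constrained — by the island property together with monochromaticity, such an interior vertex forces $I_i$ and the island carrying $F$ to share a color, since any input point strictly inside $\CH{I_j}$ already belongs to $I_j$. The delicate step, which I expect to be the crux, is to show that restoring convexity of each pocket $\CH{I_j} \setminus \CH{I_i}$ does not create more faces than crossings, i.e.\ that the reflex vertices are absorbed into the $2\OptDJ$ budget rather than scaling with the hull size of $I_i$; this is where the greedy maximality of the islands and the alternating-color mechanism of Lemma~\ref{lemma:kgon} underlying Lemma~\ref{lemma:greedy-island-intersection-bound} must be brought to bear, while simultaneously ensuring the repair does not introduce edges off the island boundaries that would break the auxiliary invariant for later augmentations.
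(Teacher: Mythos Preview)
Your overall framework---overlay $\delta(\CH{I_i})$ onto $A$, collapse the interior of $\CH{I_i}$ to a single face, and bound the number of new exterior faces by the number of crossings with earlier island boundaries---is exactly the paper's. Your auxiliary invariant that every edge of the arrangement lies on some $\delta(\CH{I_j})$ is what lets Lemma~\ref{lemma:greedy-island-intersection-bound} bound the crossings by $x \le 2\OptDJ\,(i-1)$; the paper then counts faces via Euler's formula ($a+x$ new vertices, $a+2x$ new edges, hence $x$ new faces), which is equivalent to your per-arc charging.

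Where you go astray is in treating the convexity repair as an open problem to be solved by a further combinatorial argument. The paper shows that \emph{no repair is ever needed}: every exterior piece $F \setminus \CH{I_i}$ is already convex. The argument uses the tie-breaking rule of overlap-greedy, not Lemma~\ref{lemma:kgon}. A reflex vertex of such a piece would be a hull vertex $v$ of $I_i$ lying in the interior of a bounded face $F$ of $A$; by the island-arrangement axioms $F \subseteq \CH{I_j}$ for some $j<i$, so $v \in I_j$ and $v$ was already covered before step $i$. Then $I_i \setminus \{v\}$ is still an island (its hull is contained in $\CH{I_i}$ and omits only $v$), covers exactly the same set $U_i$ of previously uncovered points, and contains strictly fewer previously covered points---contradicting the assumption that overlap-greedy, among islands maximizing $|U_i|$, picks one minimizing the number of already-covered points. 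Thus the configuration you were worried about cannot occur; the non-convex pockets never arise, and your auxiliary invariant is preserved automatically since no extra diagonals are introduced. Your instinct that maximality of $I_i$ must be invoked here was correct, but the relevant form is this tie-breaking minimality, not an alternating-color count.
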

\begin{proof}
\begin{figure}[tb]
    \hspace*{\fill}
    \includegraphics[page=1]{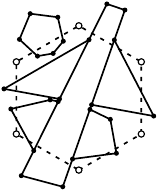}
    \hfill
    \includegraphics[page=2]{bold-augmentation.pdf}
    \hfill
    \includegraphics[page=3]{bold-augmentation.pdf}
    \hspace*{\fill}
    \caption{Examples illustrating Lemma~\ref{lemma:bold:augmentation}. Left: arrangement $A$; island $I_i$ is drawn with a dashed stroke and with white vertices. Middle: arrangement $B$. Right: arrangement $A'$.}
    \label{fig:bold-augmentation}
\end{figure}
If $I_i$ is disjoint from each island $I_1, \dots, I_{i-1}$ then the claim is trivial, so assume it is not.
Let $B$ be the planar subdivision obtained by adding $\delta(\CH{I_i})$ to $A$ (Figure~\ref{fig:bold-augmentation}, middle).
The planar graph corresponding to $B$ has additional vertices, edges and faces induced by $\delta(\CH{I_i})$.
In particular, $B$ has an additional $a + x$ vertices where $a = |I_i \cap \delta(\CH{I_i})|$ and $x$ is the total number of intersections between $\delta(\CH{I_i})$ and $\delta(\CH{I_1}), \dots, \delta(\CH{I_{i-1}})$ (without loss of generality, we assume that at most two boundaries of $I_1, \dots, I_i$ share a common point). 
Furthermore, because $\delta(\CH{I_i})$ consists of $a$ line segments and every intersection with an existing edge creates two additional edges, $B$ has $a + 2x$ edges more than $A$. 
Therefore, Euler's formula implies that $B$ has $x$ faces more than $A$, which is at most $2\OptDJ \cdot (i-1)$ by Lemma~\ref{lemma:greedy-island-intersection-bound}.

Note that $B$ is not an island arrangement of $I_1, \dots, I_i$ according to Definition~\ref{def:island-arrangement} because faces within $\CH{I_i}$ may be non-convex.
Faces of $B$ that are not subsets of $\CH{I_i}$ are all convex.
Indeed, observe that $A$ is an island arrangement, so all of its faces are convex. 
Furthermore, any bounded face of $A$ that is not present in $B$ is split by $\delta(\CH{I_i})$ into convex faces. 
Indeed, suppose a non-convex face was created. 
Then any reflex vertex of such a face, originating from $I_i$, may be removed from $I_i$ to obtain an island covering as many uncovered points as $I_i$.
This contradicts the fact that $I_i$ is minimal (by definition of the overlap-greedy algorithm).
Thus, only faces of $B$ that are subsets of $\CH{I_i}$ may be non-convex.
Define $A'$ as a copy of $B$ where the interior $\CH{I_i}$ is modified to be a single face (Figure~\ref{fig:bold-augmentation}, right).
Then, by our previous argument $A'$ has only convex faces and is an arrangement of $I_1, \dots, I_i$.
Clearly, the number of faces in $A'$ is bounded by the number of faces in $B$, yielding the desired result.
\end{proof}

By repeatedly applying Lemma~\ref{lemma:bold:augmentation}, an island cover returned by overlap-greedy can be transformed into a compatible island partition.
Let \emph{bold overlap-greedy} be the algorithm that first runs overlap-greedy to obtain islands $I_1, \dots, I_m$, then repeatedly applies the bold augmentation step to create an island arrangement $A$ of $I_1, \dots, I_m$, and finally extracts an island partition from the faces of $A$. 
The following bound holds on its approximation ratio.

\begin{restatable}{corollary}{boldOverlapGreedy}
Bold overlap-greedy has approximation ratio $O(\OptDJ^2 \log^2{n})$.
\end{restatable}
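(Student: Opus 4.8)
The plan is to assemble the corollary from the three results already established: the $O(\log n)$ approximation for island \emph{cover} by overlap-greedy, the intersection bound of Lemma~\ref{lemma:greedy-island-intersection-bound}, and the bold augmentation step of Lemma~\ref{lemma:bold:augmentation}. First I would fix notation: let $I_1, \dots, I_m$ be the islands returned by overlap-greedy, so that $m$ is the cardinality of the island cover. Since overlap-greedy is an $O(\log n)$-approximation to the minimum island cover, and since the minimum island cover is at most the minimum island partition $\OptDJ$ (every partition is in particular a cover), we have $m = O(\OptDJ \log n)$. This is the first key inequality I would record.

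Next I would bound the number of faces in the final island arrangement $A$ of $I_1, \dots, I_m$. Starting from the trivial arrangement of $I_1$ (one bounded face) and applying Lemma~\ref{lemma:bold:augmentation} once for each subsequent island, the number of faces increases by at most $2\OptDJ \cdot (i-1)$ when island $I_i$ is added. Summing the telescoping bound over $i = 1, \dots, m$ gives a total face count of at most
\[
1 + \sum_{i=1}^{m} 2\OptDJ \cdot (i-1) = 1 + 2\OptDJ \binom{m}{2} = O(\OptDJ \cdot m^2).
\]
Each bounded face of an island arrangement is convex and is a subset of some $\CH{I_j}$, so the faces induce a compatible island partition whose cardinality is at most the number of bounded faces, namely $O(\OptDJ \cdot m^2)$.

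Finally I would substitute the bound $m = O(\OptDJ \log n)$ into the face count. This yields a partition of cardinality $O(\OptDJ \cdot (\OptDJ \log n)^2) = O(\OptDJ^3 \log^2 n)$ faces; dividing by $\OptDJ$ to express the approximation ratio gives $O(\OptDJ^2 \log^2 n)$, as claimed. I should double-check that extracting an island partition from the faces does not inflate the count — by the definition of an island arrangement the bounded faces already cover each $\CH{I_j}$ with convex pieces, so one partition island per relevant face suffices, and the compatibility requirements of the \textbf{Compatible} definition are met because each face lies inside some original island and the faces are pairwise-disjoint.

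The main obstacle I anticipate is not any single inequality but ensuring the bookkeeping between \emph{cover cardinality}, \emph{face count}, and \emph{approximation ratio} is consistent: the factor of $\OptDJ$ enters twice (once from the per-augmentation face increase via Lemma~\ref{lemma:greedy-island-intersection-bound}, and once more from squaring $m \sim \OptDJ \log n$), and one must be careful that the final ratio is stated relative to $\OptDJ$ rather than relative to the raw face count. The argument is otherwise a direct composition of the preceding lemmas.
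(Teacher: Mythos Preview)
Your proposal is correct and follows essentially the same route as the paper: bound $m = O(\OptDJ \log n)$ via the cover-versus-partition inequality, sum the per-step face increments from Lemma~\ref{lemma:bold:augmentation} to get $O(\OptDJ \cdot m^2)$ islands, and divide by $\OptDJ$ for the ratio. The paper's version is terser---it states the $O(m^2\OptDJ)$ bound without writing out the summation---but the argument is the same.
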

\begin{proof}
Overlap-greedy returns $m = O(\OptO \log{n})$ islands with $\OptO$ denoting the cardinality of the optimal island cover. Any island partition is an island cover, so $\OptO \leq \OptDJ$. Lemma~\ref{lemma:bold:augmentation} implies that the partition returned by bold overlap-greedy uses $O(m^2\OptDJ)$ islands. Applying the bound for $m$ yields $O(m^2 \OptDJ) = O(\OptDJ^3 \log^2{n})$. Thus, the approximation ratio of bold overlap-greedy is $O(\OptDJ^2 \log^2{n})$.
\end{proof}

\section{Line-Greedy}\label{sec:line-greedy}
In this section, we explore the relation between our problem and that of separating colors with the minimum number of lines.
In particular, we show that greedily chosen separating lines induce an $O(\OptDJ\log^2{n})$-approximation to the minimum-cardinality island partition.

A set of lines $L$ \textit{separates} a set of colored points $S$ if each face in the arrangement $\mathcal{A}(L)$ is monochromatic.
The problem of finding the minimum-cardinality set of such separating lines is W[1]-hard with the parameter being the solution size~\cite{separate-by-lines-FPT}.
Furthermore, the problem is NP-hard~\cite{separate-by-lines-NP} and APX-hard~\cite{separate-by-lines-APX}, even when allowing only axis-parallel lines.
The problem can be viewed as a set cover problem where lines are used to cover line segments between pairs of points of different color.
Thus, the corresponding greedy algorithm, which we refer to as \textit{line-greedy}, yields a $O(\log{n})$-approximation~\cite{separate-by-lines-greedy1, separate-by-lines-greedy2}.
Line-greedy can be implemented to run in $O(k \OptL n^2 \log{n})$ time~\cite{separate-by-lines-greedy1}, where $\OptL$ is the optimal number of lines and $k$ is the number of colors of the input set.

If $L$ separates $S$, then the faces of the arrangement $\mathcal{A}(L)$ induce a partition of $S$ into $O(|L|^2)$ islands.
Conversely, an island partition $\mathcal{P}$ of $S$, with $|\mathcal{P}| \geq 3$, induces a set of $O(|\mathcal{P}|)$ lines that separates $S$.
This can be shown using a construction by Edelsbrunner, Robison, and Shen~\cite{cover-convex-sets-non-overlapping-polygons}.
We sketch their construction, adapted slightly for our use; see their paper for details and proofs. Circumscribe a rectangle around all the polygons---the convex hulls of the islands in $\mathcal{P}$.
Grow the polygons, by moving their sides, until they are maximal.
Extend each shared polygon side to obtain a set of lines.
This set of lines separates the input points.
Furthermore, each line corresponds to an edge of the contact graph of the expanded polygons (Figure~\ref{fig:grow-clusters}).
Because the contact graph is planar, there are at most $3|\mathcal{P}| - 6$ lines, yielding the desired result.
While the exact running time of the construction is unclear, it is clearly polynomial.
Pocchiola and Vegter~\cite{on-polygonal-covers} provide an alternative construction that makes use of a pseudo-triangulation of the polygons. Their algorithm runs in $O(n + |\mathcal{P}|\log{n})$ time.

\begin{figure}[tb]
    \hspace*{\fill}
    \includegraphics[page=1]{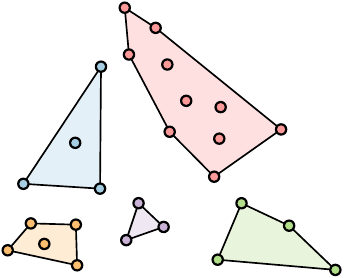}
    \hfill
    \includegraphics[page=2]{grow-clusters.pdf}
    \hspace*{\fill}
    \caption{Left: islands; right: expanded islands and their contact graph.}
    \label{fig:grow-clusters}
\end{figure}

Thus, an optimal island partition induces an $O(\OptL)$-approximation to the optimal set of separating lines.
Conversely, an optimal set of separating lines induces an $O(\OptDJ)$-approximation to the optimal island partition.
There is an analogous relation between approximation algorithms of the two problems. 
In particular, we have the following result.

\begin{restatable}{lemma}{lineGreedy}
Line-greedy induces an $O(\OptDJ \log^2{n})$-approximation to the minimum-cardinality island partition.
\end{restatable}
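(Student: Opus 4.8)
The plan is to combine the two directions of the correspondence between separating lines and island partitions, established just above, with the set-cover guarantee for line-greedy. The three ingredients I would assemble are: (i) line-greedy is an $O(\log n)$-approximation for the minimum-cardinality separating-line problem, so the line set $L_g$ it returns has size $|L_g| = O(\OptL \log n)$; (ii) any separating line set $L$ induces, via the faces of $\mathcal{A}(L)$, an island partition of cardinality $O(|L|^2)$; and (iii) the optimal island partition induces $O(\OptDJ)$ separating lines, whence $\OptL = O(\OptDJ)$.

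First I would apply (i) to bound the number of lines produced by line-greedy. Since these lines separate $S$, I would then feed them into (ii) to obtain an induced island partition of cardinality $O(|L_g|^2) = O(\OptL^2 \log^2 n)$; this is the candidate partition whose quality we must assess. The crucial substitution is (iii): replacing $\OptL$ by $O(\OptDJ)$ turns this bound into $O(\OptDJ^2 \log^2 n)$ islands. Dividing by $\OptDJ$, the cardinality of the optimal partition, yields an approximation ratio of $O(\OptDJ \log^2 n)$, exactly as claimed.

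I expect no genuinely hard step here; the content is bookkeeping over the already-proven relations. The one place that demands care is which relation to invoke and when: one must use $\OptL = O(\OptDJ)$ from direction (iii), not the weaker $\OptDJ = O(\OptL^2)$ from the opposite direction, since the latter would inflate the final bound. A secondary detail is the hypothesis $|\mathcal{P}| \geq 3$ needed for (iii); the cases $\OptDJ \in \set{1, 2}$ fall in a constant regime where a separating line set of size $O(1)$, and hence the whole chain, holds trivially, so they can be dispatched separately and absorbed into the asymptotic notation.
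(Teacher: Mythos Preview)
Your proposal is correct and matches the paper's proof essentially line for line: the paper also chains the line-greedy $O(\log n)$ guarantee, the $O(|L|^2)$ face bound, and the inequality $\OptL \leq 3\OptDJ - 6$ to obtain $O(\OptDJ^2\log^2 n)$ islands and hence an $O(\OptDJ\log^2 n)$ approximation ratio. Your extra remark handling the edge case $\OptDJ \in \set{1,2}$ is a detail the paper leaves implicit.
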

\begin{proof}
Line-greedy returns a set of lines $L$ of cardinality $O(\OptL \log{n})$. Creating an island for each face of $\mathcal{A}(L)$ that contains input points yields $O(\OptL^2 \log^2{n})$ islands. We have $\OptL \leq 3\OptDJ - 6$ because the optimal solution corresponding to $\OptDJ$ induces a set of $3\OptDJ - 6$ separating lines. Thus, this approach yields an $O(\OptDJ \log^2{n})$-approximation.
\end{proof}

For a lower bound instance, place points in a square grid of $k = 2^\ell$ rows and columns and color them alternatingly as in a checkerboard.
In addition, place points on the corners of thin axis-parallel rectangles on the sides of the grid to encourage the line-greedy algorithm to use axis-parallel lines (Figure~\ref{fig:line-greedy-lower}).
We suspect that for any $\ell \geq 1$ line-greedy returns horizontal and vertical separating lines that separate the rows and columns of the grid as shown on the left in Figure~\ref{fig:line-greedy-lower}.
However, a formal proof eludes us.
If this were true, then the island partition induced by the line-greedy solution would have cardinality $\Omega(k^2)$.
Because an island partition of cardinality $O(k)$ exists (Figure~\ref{fig:line-greedy-lower}, right), this would result in an $\Omega(\sqrt{n}) = \Omega(\OptDJ)$ lower bound on the approximation that is attained by an island partition induced by line-greedy.

\begin{figure}[tb]
    \hspace*{\fill}
    \includegraphics[page=5]{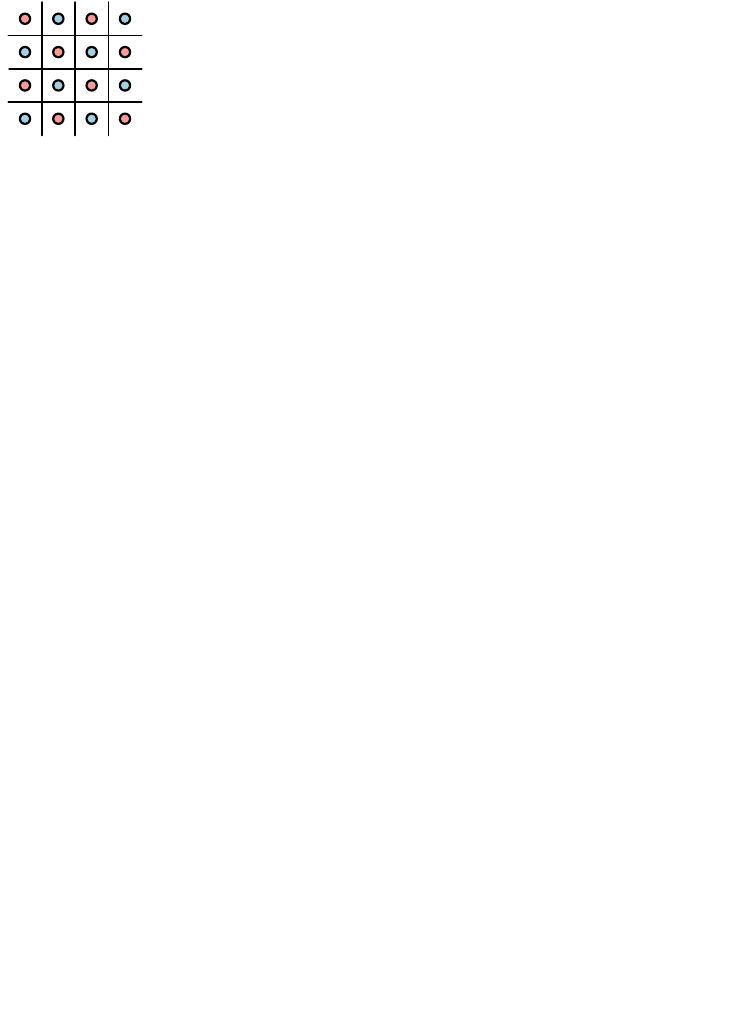}
    \hfill
    \includegraphics[page=6]{line-greedy-lower.pdf}
    \hspace*{\fill}
    \caption{The figure shows an idea of a lower bound on the approximation that is attained by an island partition induced by line-greedy.}
    \label{fig:line-greedy-lower}
\end{figure}

\section{Conclusion}
As far as we are aware, this is the first paper that addresses the algorithmic problem of creating a minimum-cardinality partition of colored points into monochromatic islands.
We have proven bounds on the approximation ratios of three greedy algorithms: \emph{disjoint-greedy}, \emph{overlap-greedy}, and \emph{line-greedy}.
Of the three algorithms, our upper bound of $O(\OptDJ \log^2{n})$ on the approximation ratio of line-greedy is best.
To put this bound into perspective, note that the best bound on the approximation ratio of a greedy algorithm for the related, but arguably simpler, problem of covering colored points by the minimum number of pairwise-disjoint triangles is $O(\OptDJ \log^2{n})$ as well~\cite{geometric-partitions}. 
Agarwal and Suri~\cite{geometric-partitions} also proved an $O(\log{n})$-approximation bound for an $O(n^8)$ dynamic programming algorithm.
However, that approach does not seem applicable to our problem; in particular, the fact that islands are not of constant complexity complicates matters.
Avenues for future work are finding algorithms with better approximation ratios and settling the computational complexity of the problem.

\bibliography{bibliography}

\begin{thebibliography}{10}

\bibitem{geometric-partitions}
Pankaj~K. Agarwal and Subhash Suri.
\newblock Surface approximation and geometric partitions.
\newblock {\em {SIAM} Journal on Computing}, 27(4):1016--1035, 1998.
\newblock \href {https://doi.org/10.1137/S0097539794269801} {\path{doi:10.1137/S0097539794269801}}.

\bibitem{optimal-islands}
Crevel Bautista{-}Santiago, Jos{\'{e}}~Miguel D{\'{i}}az{-}B{\'{a}}{\~{n}}ez, Dolores Lara, Pablo P{\'{e}}rez{-}Lantero, Jorge Urrutia, and Inmaculada Ventura.
\newblock Computing optimal islands.
\newblock {\em Operations Research Letters}, 39(4):246--251, 2011.
\newblock \href {https://doi.org/10.1016/j.orl.2011.04.008} {\path{doi:10.1016/j.orl.2011.04.008}}.

\bibitem{coarseness}
Sergey Bereg, Jos{\'{e}}~Miguel D{\'{\i}}az{-}B{\'{a}}{\~{n}}ez, Dolores Lara, Pablo P{\'{e}}rez{-}Lantero, Carlos Seara, and Jorge Urrutia.
\newblock On the coarseness of bicolored point sets.
\newblock {\em Computational Geometry}, 46(1):65--77, 2013.
\newblock \href {https://doi.org/10.1016/J.COMGEO.2012.04.003} {\path{doi:10.1016/J.COMGEO.2012.04.003}}.

\bibitem{separate-by-lines-FPT}
Edouard Bonnet, Panos Giannopoulos, and Michael Lampis.
\newblock On the parameterized complexity of red-blue points separation.
\newblock {\em Journal of Computational Geometry}, 10(1):181--206, 2019.
\newblock \href {https://doi.org/10.20382/jocg.v10i1a7} {\path{doi:10.20382/jocg.v10i1a7}}.

\bibitem{two-disjoint-disks}
Sergio Cabello, Jos{\'{e}}~Miguel D{\'{i}}az{-}B{\'{a}}{\~{n}}ez, and Pablo P{\'{e}}rez{-}Lantero.
\newblock Covering a bichromatic point set with two disjoint monochromatic disks.
\newblock {\em Computational Geometry}, 46(3):203--212, 2013.
\newblock \href {https://doi.org/10.1016/j.comgeo.2012.06.002} {\path{doi:10.1016/j.comgeo.2012.06.002}}.

\bibitem{separate-by-lines-APX}
Gruia C{\u{a}}linescu, Adrian Dumitrescu, Howard~J. Karloff, and Peng{-}Jun Wan.
\newblock Separating points by axis-parallel lines.
\newblock {\em International Journal of Computational Geometry \& Applications}, 15(6):575--590, 2005.
\newblock \href {https://doi.org/10.1142/S0218195905001865} {\path{doi:10.1142/S0218195905001865}}.

\bibitem{coarseness-new}
Jos{\'{e}}~Miguel D{\'{\i}}az{-}B{\'{a}}{\~{n}}ez, Ruy~Fabila Monroy, Pablo P{\'{e}}rez{-}Lantero, and Inmaculada Ventura.
\newblock New results on the coarseness of bicolored point sets.
\newblock {\em Information Processing Letters}, 123:1--7, 2017.
\newblock \href {https://doi.org/10.1016/J.IPL.2017.02.007} {\path{doi:10.1016/J.IPL.2017.02.007}}.

\bibitem{monochromatic-parts}
Adrian Dumitrescu and J{\'{a}}nos Pach.
\newblock Partitioning colored point sets into monochromatic parts.
\newblock {\em International Journal of Computational Geometry \& Applications}, 12(5):401--412, 2002.
\newblock \href {https://doi.org/10.1142/S0218195902000943} {\path{doi:10.1142/S0218195902000943}}.

\bibitem{cover-convex-sets-non-overlapping-polygons}
Herbert Edelsbrunner, Arch~D. Robison, and Xiaojun Shen.
\newblock Covering convex sets with non-overlapping polygons.
\newblock {\em Discrete Mathematics}, 81(2):153--164, 1990.
\newblock \href {https://doi.org/10.1016/0012-365X(90)90147-A} {\path{doi:10.1016/0012-365X(90)90147-A}}.

\bibitem{DBLP:journals/comgeo/Fischer97}
Paul Fischer.
\newblock Sequential and parallel algorithms for finding a maximum convex polygon.
\newblock {\em Computational Geometry}, 7:187--200, 1997.
\newblock \href {https://doi.org/10.1016/0925-7721(95)00035-6} {\path{doi:10.1016/0925-7721(95)00035-6}}.

\bibitem{separate-by-lines-greedy1}
Hossein Jowhari and Mohsen Rezapour.
\newblock Monochromatic partitioning of colored points by lines.
\newblock {\em Information Processing Letters}, 182:106402, 2023.
\newblock \href {https://doi.org/10.1016/j.ipl.2023.106402} {\path{doi:10.1016/j.ipl.2023.106402}}.

\bibitem{survey-discrete-red-blue-new}
Mikio Kano and Jorge Urrutia.
\newblock Discrete geometry on colored point sets in the plane - {A} survey.
\newblock {\em Graphs and Combinatorics}, 37(1):1--53, 2021.
\newblock \href {https://doi.org/10.1007/S00373-020-02210-8} {\path{doi:10.1007/S00373-020-02210-8}}.

\bibitem{separate-by-lines-greedy2}
Bing Lu, Hongwei Du, Xiaohua Jia, Yinfeng Xu, and Binhai Zhu.
\newblock On a minimum linear classification problem.
\newblock {\em Journal of Global Optimization}, 35(1):103--109, 2006.

\bibitem{separate-by-lines-NP}
Nimrod Megiddo.
\newblock On the complexity of polyhedral separability.
\newblock {\em Discrete \& Computational Geometry}, 3:325--337, 1988.
\newblock \href {https://doi.org/10.1007/BF02187916} {\path{doi:10.1007/BF02187916}}.

\bibitem{on-polygonal-covers}
Michel Pocchiola and Gert Vegter.
\newblock On polygonal covers.
\newblock {\em Contemporary Mathematics}, 223:257--268, 1999.

\bibitem{diverse-partition}
Marc van Kreveld, Bettina Speckmann, and J{\'{e}}r{\^{o}}me Urhausen.
\newblock Diverse partitions of colored points.
\newblock In {\em 17th Algorithms and Data Structures Symposium ({WADS})}, volume 12808 of {\em Lecture Notes in Computer Science}, pages 641--654, 2021.
\newblock \href {https://doi.org/10.1007/978-3-030-83508-8_46} {\path{doi:10.1007/978-3-030-83508-8_46}}.

\end{thebibliography}

\end{document}